\newcommand{\poly}{\text{poly}}
\newcommand{\cnd}{\mskip 1mu|\mskip 1mu}
\renewcommand{\phi}{\varphi}
\renewcommand{\epsilon}{\varepsilon}
\renewcommand{\ge}{\geqslant}
\renewcommand{\le}{\leqslant}
\theoremstyle{plain}
\newtheorem{theorem}{Theorem}
\newtheorem{lemma}{Lemma}
\newtheorem{proposition}[theorem]{Proposition}
\theoremstyle{remark}
\newtheorem{remark}{Remark}
\DeclareMathOperator*{\pol}{poly}
\newcommand{\KK}{\mathrm{K}}
\newcommand{\KA}{\mathrm{KM}}
\newcommand{\m}{\mathrm{M}}
\begin{document}

\title{Prediction and MDL  for Infinite Sequences}
\author{Alexey Milovanov\footnote{A part of this work was done in National Research University Higher School of Economics, Moscow, Russian Federation. This paper was prepared within the framework of the HSE University Basic Research Program. }
\\ LASIGE, Faculdade de Ciências, Universidade de Lisboa}
\maketitle
\begin{abstract}
We combine Solomonoff's approach to universal prediction with algorithmic statistics and suggest to use the computable measure that provides the best ``explanation'' for the observed data (in the sense of algorithmic statistics) for prediction. In this way we keep the expected sum of squares of prediction errors bounded (as it was for the Solomonoff's predictor) and, moreover, guarantee that the sum of squares of prediction errors is bounded along any Martin-L\"of random sequence.

An extended abstract of this paper was presented at the 16th International Computer Science Symposium in Russia (CSR 2021)~\cite{m}.
\end{abstract}

\section{Introduction}
We consider probability distributions (or measures) on the binary tree, i.e., non-negative functions $P: \{0,1\}^* \to \mathbb{R}$ such that $P(\text{empty word}) = 1$ and $P(x0) + P(x1) = P(x)$ for every string $x$. We assume that all the values $P(x)$ are rational; $P$ is called \emph{computable i}f there exists an algorithm that on input $x$ outputs $P(x)$.

Consider the following prediction problem. Imagine a black box that generates bits according to some unknown computable distribution $P$ on the binary tree. Let $x=x_1\ldots x_n$ be the current output of the black box.  The predictor's goal is to guess the probability that the next bit is $1$, i.e., the ratio $P(1\cnd x)=P(x1)/P(x)$. 

Ray Solomonoff suggested to use the universal semi-measure $\m$ (called also the \emph{a priori probability}) for prediction. Recall that a semi-measure $S$ on the binary tree (a \emph{continuous semi-measure}) is a non-negative function $S: \{0,1\}^* \to \mathbb{R}$ such that $S(\text{empty word}) \le 1$ and $S(x0) + S(x1) \le S(x)$ for every string $x$. Semi-measures correspond to probabilistic processes that output a bit sequence but can hang forever, so an output may be some finite string $x$; the probability of this event is $S(x)-S(x0)-S(x1)$. A semi-measure $S$ is called \emph{lower semi-computable}, or \emph{enumerable}, if the set $\{(x,r) : r<S(x)\}$ is (computably) enumerable. Here $x$ is a string and $r$ is a rational number. Finally, a lower semi-computable semi-measure $\m$ is called \emph{universal} if it is maximal among all semimeasures up to a constant factor, i.e., if for every lower semi-computable semi-measure $S$ there exists $c>0$ such that $\m(x) \ge c S(x)$ for all~$x$. Such a universal semi-measure exists \cite{sol0,suv,LiVit}.\footnote{One may even require that the probabilities for finite outputs, i.e., the differences $S(x)-S(x0)-S(x1)$ are maximal, but we do not require this.} 

Solomonoff suggested to use the ratio $\m(1\cnd x):= \m(x1)/\m(x)$ to predict $P(1\cnd x)$ for an unknown computable measure $P$. He proved the following bound for the prediction errors.  
\begin{theorem}[\cite{sol}]\label{th:sol}
For every computable distribution $P$ and for every $b \in \{0,1\}$ the following sum over all binary strings is finite: 
\begin{equation}
\label{sum}
 \sum_x P(x)\cdot (P(b \cnd x) - \m(b \cnd x))^2 < \infty.   
\end{equation}
Moreover, this sum is bounded by $O(\KK(P))$,  where $\KK(P)$  is the prefix complexity of the computable measure~$P$
(the minimal length of a prefix-free program corresponding $P$). 
\end{theorem}
Note that for semi-measure the probabilities to predict $0$ and $1$ do not sum up to $1$, so the statements for $b=0$ and $b=1$ are not equivalent (but both are true).

The sum from Theorem~\ref{th:sol} can be rewritten as the expected value of the  function $D$ on the infinite binary sequences with respect to $P$, where $D(\omega)$ is defined as
$$
D(\omega)=\sum_{x  \text{ is a prefix of } \omega}
(P(b \cnd x) - \m(b \cnd x))^2.
$$
This expectation is finite, therefore for $P$-almost all $\omega$ the value $D(\omega)$ is finite and
\[
P(b \cnd x) - \m(b \cnd x) \to 0.
\]
when $x$ is an increasing prefix of $\omega$. One would like to have this convergence for all Martin-L\"of random sequences $\omega$ (with respect to measure $P$), but this is not guaranteed, since the null set provided by the argument above may not be an \emph{effectively} null set. An example from~\cite{hm} shows that this is indeed the case.
\begin{theorem}[\cite{hm}]\label{th:hm}
\label{hm}
There exist a specific universal semi-measure $\m$,  computable distribution $P$ and Martin-L{\"o}f random  \textup(with respect to P)  sequence $\omega$ such that
\[
P(b \cnd x) - \m(b \cnd x) \not\to 0.
\]
for increasing prefixes $x$ of $\omega$.
\end{theorem}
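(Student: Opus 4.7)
The plan is to take $P$ to be the uniform measure on $\{0,1\}^{\mathbb{N}}$, choose the witness sequence $\omega$ to be Chaitin's halting probability $\Omega$ (which is Martin-L\"of random with respect to $P$), and construct a universal semi-measure $\m$ that pumps extra mass onto the $0$-children of infinitely many prefixes of $\Omega$, keeping $\m(0\cnd y_n)$ bounded away from $\tfrac12 = P(0\cnd y_n)$. Concretely, fix any background universal enumerable semi-measure $\m_0$ and set $\m := \tfrac12(\m_0+S)$, where $S$ is an enumerable semi-measure of total mass at most $1$ to be constructed. Since $\m\ge\tfrac12\m_0$, the modified $\m$ is again universal; and $\m$ is enumerable as a sum of two enumerable semi-measures.

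To build $S$, I will exploit a computable increasing approximation $\alpha_t\nearrow\Omega$. The idea is to enumerate instructions of the form ``add $\varepsilon$ to $S(z\,0)$'' for candidate strings $z$ drawn from initial segments of the approximations $\alpha_t$, with weights tapered so that (i) the total mass of $S$ remains bounded, (ii) sufficient cumulative mass lands at the true prefixes $y_n$ of $\Omega$ on infinitely many $n$ to dominate $\m_0(y_n 0)$, and (iii) essentially no extra mass is added to the corresponding siblings $y_n 1$. The desired nonconvergence then follows directly: $\m(0\cnd y_n)=\m(y_n 0)/\m(y_n)$ is, by construction, bounded strictly above $\tfrac12$ on infinitely many $n$, whereas $P(0\cnd y_n)=\tfrac12$.

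The main technical obstacle is aligning the boosts with genuine prefixes of $\Omega$: the left-approximation $\alpha_t\nearrow\Omega$ does not identify individual bits of $\Omega$ in finite time, so a substantial fraction of the mass of $S$ is inevitably wasted on incorrect candidate strings. Showing that enough mass nonetheless accumulates at the true $y_n$ relies on the randomness of $\Omega$ to control the rate at which its bits stabilize. A secondary subtlety is that $\m_0(y_n)$ can exceed $2^{-n}$ by a sub-exponential factor along a random path, so the boost sizes must be chosen in a data-dependent, enumerable way — for instance, as running lower approximations of a constant fraction of $\m_0(y_n)$ itself — and then bounded using the fact that $\sum_n \m_0(y_n)$ is finite along any Martin-L\"of random path. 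Once this bookkeeping is handled, the desired nonconvergence of $\m(b\cnd x)-P(b\cnd x)$ along $\omega=\Omega$ is immediate from the construction.
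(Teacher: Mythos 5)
Your high-level plan---perturb a universal semi-measure by mixing in an enumerable boost $S$ concentrated along prefixes of $\Omega$ so that $\m(0\cnd y_n)$ stays bounded away from $1/2$ infinitely often---has the right shape, and the observations that the mixture stays universal and enumerable are fine. But the entire content of the theorem sits in the step you defer: constructing an enumerable $S$ of bounded total mass such that, using only the left-approximation $\alpha_t\nearrow\Omega$, a constant fraction of $\m(y_n)$ actually lands on the cylinders $[y_n0]$ for the \emph{true} prefixes $y_n$. Your scheme of tapered increments on candidate strings read off $\alpha_t$ does not work as described: the settling time of the $n$-th bit of $\Omega$ is not computably bounded (it dominates the busy-beaver function), so ``the randomness of $\Omega$'' gives no control whatsoever over when bits stabilize, and any rule that commits mass to the current candidate prefix before it settles risks wasting all but a vanishing fraction of its budget on strings that are not prefixes of $\Omega$. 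You name this obstacle and then assert it can be handled; that assertion is the theorem.

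The paper closes exactly this gap with one clean device: take $S$ to be the uniform measure restricted to the interval $(0,\Omega)\subset[0,1]$, i.e.\ $S(x)=\lambda([x]\cap(0,\Omega))$. This is automatically lower semicomputable (approximate by $(0,\alpha_t)$; mass once added never has to be retracted, since $(0,\alpha_t)\subset(0,\Omega)$), it is a semi-measure of total mass $\Omega\le 1$, and with no alignment argument it satisfies $S(y_n0)=S(y_n)$ exactly whenever the $(n{+}1)$-st bit of $\Omega$ is $0$, because $[y_n]\cap(0,\Omega)$ then lies entirely inside $[y_n0]$. Quantitatively, $S(y_n)=2^{-n}\beta_n$ with $\beta_n\ge 1/4$ at the infinitely many positions where $\Omega$ continues with ``01'', while Schnorr--Levin gives $\m_0(y_n)\le c\,2^{-n}$ along the random path; mixing $S$ with weight close enough to $1$ (depending on $c$---note that your fixed weight $1/2$ need not suffice) then forces $\m(0\cnd y_n)$ to exceed $1/2$ by a fixed constant infinitely often. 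I recommend replacing your hand-built $S$ by this interval semi-measure; as written, your argument has a genuine gap at its crux.
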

Lattimore and Hutter  generalized  Theorem~\ref{hm} by proving the same statement for a wide class of universal semi-measures~\cite{lh}. 

Trying to overcome this problem and get a good prediction for all Martin-L\"of random sequences, we suggest the following approach to prediction. For a finite string $x$ we find a distribution $Q$ on the binary tree that is the best (in some sense) explanation for~$x$. The probabilities of the next bits are then predicted as $Q(0\cnd x)$ and $Q(1\cnd x)$.

This approach combines two advantages. The first is that the series of type \eqref{sum} also converges, though the upper bound for it (at least the one that we are able to prove) is much greater than $O(\KK(P))$. The second property is that the prediction error (defined as in Theorem~\ref{th:hm}) converges to zero for every Martin-L{\"o}f random sequence.

Let us give formal definitions. The quality of the computable distribution $Q$ on the binary tree, considered as an ``explanation'' for a given string $x$, is measured by the value $3\KK(Q) - \log Q(x)$: the smaller this quantity is, the better is the explanation. One can rewrite this expression as the sum
\[
2\KK(Q) + [\KK(Q)-\log Q(x)].
\]
Here the expression in the square brackets can be interpreted as the length of the two-part description of $x$ using $Q$ (first, we specify the hypothesis $Q$ using its shortest prefix-free program, and then, knowing $Q$, we specify $x$ using arithmetic coding; the second part requires about $-\log Q(x)$ bits). The first term $2\KK(Q)$ is added to give extra preference to simple hypotheses; the factor $2$ is needed for technical reasons (in fact, any constant greater than~$1$ will work\footnote{One can consider a similar prediction method with factor $1 + \alpha$ for arbitrary positive $\alpha$ instead of factor $2$. However it does not give an significant improvement so we do not do it. }).

For a given $x$ we select the best explanation that makes this quality minimal.

Formally, we find  $\text{argmin}_Q\{3\KK(Q) - \log Q(x)\}$.

Then we  predict the probability that the next bit after $x$ is~$b$:
\[
H(b \cnd x):= \frac{Q(xb)}{Q(x)},
\]
where $Q$ is the best explanation for string  $x$ (or one of the best explanations if there are several).

In this paper we prove the following results:

\begin{theorem}
\label{sum_solomonoff}
For every computable distribution $P$ the following sum over all binary strings $x$ is finite:
$$
 \sum_x P(x) (P(0 \cnd x) - H(0 \cnd x))^2 < \infty.   
$$
\end{theorem}

\begin{theorem}\label{m-l}

Let   $P$ be a computable measure and let $\omega$ be a Martin-L{\"o}f random sequence with respect to  $P$. Then 
\[
H(0 \cnd x) - P(0\cnd x) \rightarrow 0
\]
for prefixes $x$ of~$\omega$ as the length of prefix goes to infinity.
\end{theorem}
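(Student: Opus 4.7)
The plan is to show that along a $P$-Martin-L\"of random sequence $\omega$ the best explanations $Q_x$ for prefixes $x$ of $\omega$ are forced to live in a finite set of computable measures, and then to conclude with a martingale-convergence argument applied to each of them.

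The first step is a complexity bound on $Q_x$. By the characterization of Martin-L\"of randomness through the a priori probability, the ratio $\m(x)/P(x)$ is bounded by a constant $C(\omega)$ along the prefixes of a $P$-random $\omega$. Since $\m$ is universal, $\m(x)\ge 2^{-\KK(Q)-O(1)}Q(x)$ for every computable measure $Q$, so along prefixes $x$ of $\omega$ one has $Q(x)/P(x)\le C(\omega)\cdot 2^{\KK(Q)+O(1)}$ for every computable $Q$. Applied to $Q=Q_x$ and combined with the defining inequality of the best explanation (using $P$ as a competitor), $3\KK(Q_x)-\log Q_x(x)\le 3\KK(P)-\log P(x)$, which rewrites as $\log(Q_x(x)/P(x))\ge 3\KK(Q_x)-3\KK(P)$, this yields $3\KK(Q_x)-3\KK(P)\le \KK(Q_x)+O(1)$, i.e.\ $\KK(Q_x)\le (3\KK(P)+O(1))/2$, where the $O(1)$ depends only on $\omega$ and $P$. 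Consequently all the measures $Q_x$ arising from prefixes $x$ of $\omega$ belong to a finite set $\mathcal{F}$ of computable measures.

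The second step is martingale convergence. For each $Q\in\mathcal{F}$ the ratio $M_n^Q:=Q(\omega_{1..n})/P(\omega_{1..n})$ is a nonnegative computable $P$-martingale, so by the effective Doob convergence theorem (provable via an effective upcrossing inequality) $M_n^Q$ converges to some $L_Q\in[0,\infty)$ at every $P$-Martin-L\"of random point. Let $\mathcal{F}^\ast\subseteq\mathcal{F}$ be the subset of those $Q$ that coincide with $Q_{x_n}$ for infinitely many~$n$. For each such $Q$ the inequality $\log M_n^Q\ge 3\KK(Q)-3\KK(P)\ge -3\KK(P)$ holds along an infinite subsequence, so $L_Q\ge 2^{-3\KK(P)}>0$, and the quotient
\[
\frac{M_{n+1}^Q}{M_n^Q}=\frac{Q(\omega_{n+1}\cnd \omega_{1..n})}{P(\omega_{n+1}\cnd \omega_{1..n})}
\]
tends to $1$. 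Since the conditional probabilities are bounded by $1$, this gives $Q(\omega_{n+1}\cnd\omega_{1..n})-P(\omega_{n+1}\cnd\omega_{1..n})\to 0$; and as $Q(\cdot\cnd x)$ and $P(\cdot\cnd x)$ are distributions on $\{0,1\}$, the quantity $|Q(0\cnd\omega_{1..n})-P(0\cnd\omega_{1..n})|$ equals the same difference and also tends to $0$.

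For all sufficiently large $n$ one has $Q_{x_n}\in\mathcal{F}^\ast$; taking the maximum over the finitely many elements of $\mathcal{F}^\ast$ of the quantities $|Q(0\cnd x_n)-P(0\cnd x_n)|$, which all tend to $0$, we obtain $H(0\cnd x_n)-P(0\cnd x_n)\to 0$. The main obstacle I anticipate is pinning down the effective form of the martingale convergence theorem in the precise generality needed (applicable to any computable measure $Q$ at every $P$-Martin-L\"of random point); the ingredient is classical, but the proof will have to make this use explicit, possibly by arguing directly from an effective upcrossing bound rather than by invoking a black-box theorem.
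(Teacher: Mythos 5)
Your proposal is correct, and while its first half coincides with the paper's argument, its second half takes a genuinely different route. The first step --- using $P$ as a competitor in the definition of the best explanation together with $\m(x)\ge 2^{-\KK(Q)-O(1)}Q(x)$ and the Schnorr--Levin bound $\m(x)\le 2^{D}P(x)$ to get $2\KK(Q_x)\le 3\KK(P)+O_{\omega,P}(1)$, hence a finite candidate set $\mathcal F$ --- is exactly the paper's. From there the paper splits $\mathcal F$ into ``good'' measures $Q$ (those with respect to which $\omega$ is also Martin-L\"of random), handled by invoking Vovk's theorem as a black box, and ``bad'' ones, which by the Schnorr--Levin dichotomy satisfy $-\log Q(x)-\KA(x)\to\infty$ and are therefore eventually never selected. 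You instead treat all candidates uniformly: the likelihood ratio $M_n^Q=Q(\omega_{1..n})/P(\omega_{1..n})$ is a computable nonnegative $P$-martingale, so by the effective Doob/upcrossing theorem it converges to a finite limit at the $P$-random $\omega$; the selection inequality, applied along the infinitely many $n$ at which $Q$ is chosen, forces that limit to be at least $2^{-3\KK(P)}>0$, whence the successive ratios $Q(\omega_{n+1}\cnd\omega_{1..n})/P(\omega_{n+1}\cnd\omega_{1..n})$ tend to $1$ and the predicted probabilities converge. This neatly avoids the good/bad case distinction (you never need to decide whether $\omega$ is $Q$-random --- only that $M_n^Q$ does not tend to $0$ along the selection subsequence), and it replaces Vovk's theorem by the classical effective martingale convergence theorem, which you correctly flag as the one ingredient to be made explicit (it does hold in the needed generality: for computable $P$ and $Q$ the relevant upcrossing and Ville sets form a Martin-L\"of test, or one can cite that Martin-L\"of randomness implies computable randomness, which is equivalent to convergence of all computable martingales). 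The trade-off is that your argument yields convergence only, whereas the paper's Lemma~\ref{lem1} (its quantitative replacement for Vovk's theorem) also delivers the summability bounds of Theorems~\ref{sum_solomonoff} and~\ref{m-l_sum}; for Theorem~\ref{m-l} itself, however, your route is complete. Two small points to tidy up: the martingale $Q(x)/P(x)$ should be defined only on the support of $P$ (which contains all prefixes of $\omega$, since $P(x)\ge\m(x)2^{-D}>0$ there), and the passage from the ratio tending to $1$ to the difference tending to $0$ should note explicitly, as you do, that the conditional probabilities are bounded by $1$.
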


We speak about the probabilities of zeros, but both $P$ and $Q$ are measures, so this implies the same results for the probabilities of ones.

We prove that $$\sum_{\textup{$x$ is a prefix of $\omega$}}  (H(0| x) - P(0| x))^2 < \infty$$
(Theorem~\ref{m-l_sum}) that is the strengthening of Theorem~\ref{m-l}.

\begin{remark}
    Note that function $H$ as well as Solomonoff's predicator is uncomputable. Indeed, consider some computable predictor $H$. Then there exists a computable distribution $P$ such that $|H(0| x) - P(0| x)| \ge \frac{1}{2}$ for every $x$. Of course this predictor does not satisfy Theorem~\ref{sum_solomonoff}.
\end{remark}
In \cite{h2} Hutter suggested a similar approach but without coefficient $3$ for $\KK(Q)$ (see also \cite{h1,h3}). For this approach he proved an analogue of Theorem~\ref{sec:solomonoff} with different proof technique.    

In \cite{hm}  the existence of a semi-computable measure satisfying Theorem~\ref{m-l} was proved. In \cite{hm1} authors prove an analogue of Theorem~\ref{sum_solomonoff} for the Hellinger distance with double exponential in $\KK(P)$ bound.

In the  next section we prove Theorem~\ref{m-l}. 

In Section~\ref{sec:solomonoff} we prove Theorem~\ref{sum_solomonoff}.

Finally, in Section~\ref{enum} we consider the case when we know some information about $P$. More precisely, we know that $P$ belongs to some enumerable set of computable measures. We suggest a similar approach for prediction in this case. We prove analogues of theorems~\ref{m-l} and~\ref{sum_solomonoff} (theorems~\ref{enum_m_l} and ~\ref{sum_solomonoff_enum}) for this prediction method.  We achieved better (polynomial in complexity of $P$) error estimations in these theorems. 

\section{Prediction on Martin-L{\"o}f random sequences}\label{sec:m-l}

Recall the Schnorr--Levin theorem~\cite[ch.5]{suv} that says that a sequence $\omega$ is random with respect to a computable probability measure $P$ if and only if the ratio $\m(x)/P(x)$ is bounded for $x$ that are prefixes of~$\omega$. 

The same result can be reformulated in the logarithmic scale. Let us denote by $\KA(x)$ the \emph{a priori complexity of $x$}, i.e.,   $\lceil-\log \m(x)\rceil$ (the rounding is chosen in this way to ensure upper semicomputability of $\KA$). We have 
\[
\KA(x)\le -\log P(x) + O(1)
\]
for every computable probability measure $P$, where $O(1)$ depends on $P$ but not on $x$. Indeed, since $\m$ is maximal, the ratio $P(x)/\m(x)$ is bounded. Moreover, since $P(x)$ can be included in the mix for $\m(x)$ with coefficient $2^{-\KK(P)}$, we have
\[
\KA(x)\le -\log P(x) + \KK(P)+O(1)
\]
with some constant in $O(1)$ that does not depend on $P$ (and on $x$). As we have discussed in the previous section, the right-hand side includes the length of the two-part description of $x$.

Let us call 
\[
d(x\cnd P):=-\log P(x)-\KA(x)
\]
the \emph{randomness deficiency} of a string $x$ with respect to a computable measure $P$. (There are several notions of deficiency, but we need only this one.) Then we get
\[
d(x\cnd P)\ge -\KK(P)-O(1)
\]
so the deficiency is almost non-negative. The Schnorr--Levin theorem characterizes Martin-L\"of randomness in terms of deficiency:

\begin{theorem}[Schnorr--Levin]
\leavevmode

\textbf{\textup{(a)}} If a sequence $\omega$ is Martin-L\"of random with respect to a computable disctibution $P$, then $d(x\cnd P)$ is bounded for all prefixes $x$ of $\omega$.

\textbf{\textup{(b)}} Otherwise (if $\omega$ is not random with respect to $P$), then $d(x\cnd P)\to\infty$ as the length  of a prefix $x$ of $\omega$ increases.

\end{theorem}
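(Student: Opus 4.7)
The plan is to read the Schnorr--Levin criterion quoted at the top of this section (randomness of $\omega$ with respect to computable $P$ is equivalent to boundedness of the ratio $\m(x)/P(x)$ along the prefixes $x$ of $\omega$) on the logarithmic scale. Since $\KA(x)=\lceil-\log\m(x)\rceil$, we have
\[
d(x\cnd P)=-\log P(x)-\KA(x)=\log\frac{\m(x)}{P(x)}+O(1),
\]
so the question reduces to whether $\log(\m/P)$ stays bounded along $\omega$ or escapes to $+\infty$.

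For part (a), the quoted criterion gives directly an upper bound $\m(x)/P(x)\le C$ on prefixes of a $P$-random $\omega$, hence $d(x\cnd P)\le\log C+O(1)$. The matching lower bound $d(x\cnd P)\ge-\KK(P)-O(1)$ is already established in the text above (it is a reformulation of $\m(x)\ge 2^{-\KK(P)-O(1)}P(x)$). Combining the two bounds proves (a).

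For part (b), one needs to upgrade ``unbounded'' to ``divergent to $+\infty$''. The plan is to exhibit an explicit lower semi-computable semi-measure $\mu$ witnessing this divergence along every non-random path. Fix a universal Martin-L\"of test $\{U_n\}$ for $P$: a uniformly enumerable sequence of open sets with $P(U_n)\le 2^{-n}$ such that $\omega$ fails to be $P$-random if and only if $\omega\in\bigcap_n U_n$. Set
\[
\mu(x) := \frac{1}{C}\sum_{n\ge 1} 2^{n/2}\, P([x]\cap U_n),\qquad C=\sum_{n\ge 1}2^{-n/2}.
\]
Then $\mu$ is a lower semi-computable semi-measure: lower semi-computability uses that $P$ is computable and $U_n$ uniformly enumerable, while the identity $\mu(x0)+\mu(x1)=\mu(x)$ and the bound $\mu(\varepsilon)\le 1$ follow from additivity of $P$ and the choice of $C$. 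Now fix a non-random $\omega\in\bigcap_n U_n$. For each $n$ the openness of $U_n$ yields some prefix $x$ of $\omega$ with $[x]\subseteq U_n$; every extension $y$ of this $x$ still satisfies $[y]\subseteq U_n$, and for any such $y$ the $n$-th term of the series alone gives $\mu(y)/P(y)\ge 2^{n/2}/C$. Hence $\mu(y)/P(y)\to\infty$ along $\omega$, and by universality there exists $c>0$ with $\m\ge c\mu$, so $\m(y)/P(y)\to\infty$ and $d(y\cnd P)\to\infty$, as required.

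The only genuine obstacle is this strengthening for (b): the quoted Schnorr--Levin formulation by itself only guarantees unboundedness of $\m/P$ on a non-random path, whereas (b) demands honest divergence. The explicit test-driven semi-measure above supplies the missing quantitative witness and closes the gap.
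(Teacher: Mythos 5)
Your proof is correct. Note that the paper does not prove this theorem at all --- it is quoted as a known result with a pointer to \cite{suv} --- so there is no in-paper argument to compare against; but your write-up is essentially the standard proof found there. Part (a) is indeed just the logarithmic rereading of the boundedness criterion quoted at the start of the section, together with the lower bound $d(x\cnd P)\ge -\KK(P)-O(1)$ already established in the text. You correctly identify that the only real content is the dichotomy in part (b): non-randomness gives a priori only \emph{unboundedness} of $\m(x)/P(x)$ along $\omega$, whereas the paper later needs honest divergence $-\log Q(x)-\KA(x)\to\infty$ for ``bad'' hypotheses. Your semimeasure $\mu(x)=\frac{1}{C}\sum_n 2^{n/2}P([x]\cap U_n)$ built from a universal test does supply this: it satisfies $\mu(x0)+\mu(x1)=\mu(x)$ and $\mu(\text{empty word})\le 1$, is lower semi-computable since $P$ is computable and the $U_n$ are uniformly enumerable unions of cylinders, and for $\omega\in\bigcap_n U_n$ every $n$ eventually contributes $\mu(y)\ge 2^{n/2}P(y)/C$ on sufficiently long prefixes $y$, whence $\m(y)/P(y)\to\infty$ by universality. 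The one degenerate case worth a word is $P(x)=0$ for some prefix $x$ of $\omega$, where $d(x\cnd P)$ is $+\infty$ and the claim holds vacuously.
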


Note that there is a dichotomy: the values $d(x\cnd P)$ for prefixes $x$ of $\omega$ either are bounded or converge to infinity (as the length of $x$ goes to infinity). We can define randomness deficiency for infinite sequence $\omega$ as
$$
d(\omega\cnd P) := \sup_{x \text{ is prefix of } \omega} d(x\cnd P);
$$
it is finite if and only if $\omega$ is random with respect to $P$.

Let us also recall the following result of Vovk:

\begin{theorem}[\cite{Vovk}]
\label{vovk}
Let $P$ and $Q$ be two computable distributions. Let $\omega$ be a Martin-L\"of random sequence with respect both to $P$ and $Q$.  Then
\[
P(0\cnd x) - Q(0\cnd x)\to 0
\]
for prefixes $x$ of~$\omega$ as the length of prefix goes to infinity.
\end{theorem}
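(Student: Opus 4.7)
The plan is to reduce the statement to an integral-test argument for the quadratic variation of one explicitly chosen computable martingale. Let $R := (P+Q)/2$; this is a computable probability measure. Since $P \le 2R$, any $R$-Martin-L\"of test rescales by a factor of $2$ to a $P$-test, so $\omega$ is $R$-random. Define the computable function $f\colon \{0,1\}^* \to [0,2]$ by $f(x) := P(x)/R(x)$; symmetrically $Q(x)/R(x) = 2-f(x)$, and $f$ satisfies the tree martingale identity
\[
f(x) = R(0\cnd x)\,f(x0) + R(1\cnd x)\,f(x1).
\]

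From $P(xb) = f(xb)R(xb)$ and $Q(xb) = (2-f(xb))R(xb)$, two lines of algebra together with the martingale identity (to rewrite $f(x0)-f(x) = R(1\cnd x)(f(x0)-f(x1))$) yield the key formula
\[
P(0\cnd x) - Q(0\cnd x) = \frac{2R(0\cnd x)\,R(1\cnd x)\,\bigl(f(x0)-f(x1)\bigr)}{f(x)\,(2-f(x))}.
\]
Schnorr--Levin applied to both $P$- and $Q$-randomness of $\omega$ makes $\log(P(x)/\m(x))$ and $\log(Q(x)/\m(x))$ of size $O(1)$ along prefixes $x \prec \omega$, so $P(x)/Q(x) = \Theta(1)$ and therefore $f(x)(2-f(x)) \ge c > 0$ along $\omega$. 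It thus suffices to show that $R(0\cnd x)R(1\cnd x)(f(x0)-f(x1))^2$ tends to $0$ along $\omega$.

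For the final step I run an effective quadratic-variation argument. Since $|f|\le 2$, orthogonality of the increments of the $R$-martingale $f_n := f(\omega_{1..n})$ gives
\[
\sum_{n=0}^{\infty} E_R\bigl[(f_{n+1}-f_n)^2\bigr] = \lim_N E_R[f_N^2] - E_R[f_0^2] \le 4,
\]
and conditioning on $\omega_{1..n} = x$ evaluates the $n$-th summand to $R(0\cnd x)R(1\cnd x)(f(x0)-f(x1))^2$. Consequently the function
\[
V(\omega) := \sum_{x \prec \omega} R(0\cnd x)\,R(1\cnd x)\,\bigl(f(x0)-f(x1)\bigr)^2
\]
satisfies $E_R[V] \le 4$. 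Its partial sums are computable, so $V$ is lower semi-computable, and Markov's inequality turns the level sets $\{V > 2^{k+2}\}$ into a bona fide $R$-Martin-L\"of test. The $R$-randomness of $\omega$ thus gives $V(\omega) < \infty$, so the summands tend to $0$, closing the argument (and as a bonus producing $\sum_{x \prec \omega}(P(0\cnd x)-Q(0\cnd x))^2 < \infty$, which is the improvement to Theorem~\ref{m-l} mentioned earlier).

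The main obstacle is bookkeeping, not any deep combinatorics: one has to spot the mixture $R$ and the identity that rewrites $P(0\cnd x) - Q(0\cnd x)$ cleanly in terms of a single $R$-martingale increment of $f$, and then verify that both randomness hypotheses are genuinely used (the lower bound $f(x)(2-f(x)) \ge c$ requires both). Once that identity and the integrability of the quadratic variation are in hand, the passage from ``finite in $R$-expectation'' to ``finite on every $R$-random sequence'' is the standard effective Markov/integral-test step.
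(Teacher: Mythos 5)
Your proof is correct, but it takes a genuinely different route from the paper's. The paper derives Theorem~\ref{vovk} from a quantitative lemma (Lemma~\ref{lem1}): it forms the measure whose \emph{conditional} probabilities are the averages $(p_i+q_i)/2$ of the conditionals of $P$ and $Q$, bounds it by $2^{\KK(P,Q)}\m(x)$ via universality, multiplies the resulting inequalities against $P$ and against $Q$, and extracts $\sum_i(p_i-q_i)^2=O(\log C+\KK(P,Q))$ from strict concavity of the logarithm, with Schnorr--Levin supplying the constant $C$ along a doubly random $\omega$. You instead mix the \emph{measures} themselves, $R=(P+Q)/2$, note that $f=P/R$ is a bounded $R$-martingale, prove the exact identity $P(0\cnd x)-Q(0\cnd x)=2R(0\cnd x)R(1\cnd x)\bigl(f(x0)-f(x1)\bigr)/\bigl(f(x)(2-f(x))\bigr)$, and control the numerator by the quadratic variation $E_R[V]\le 4$ converted into an integral test, while Schnorr--Levin for both $P$ and $Q$ keeps the denominator bounded away from $0$. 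I checked the individual steps --- the martingale identity, the rewriting $f(x0)-f(x)=R(1\cnd x)(f(x0)-f(x1))$, the conditional increment variance equalling $R(0\cnd x)R(1\cnd x)(f(x0)-f(x1))^2$, the passage from finite $R$-expectation to finiteness on every $R$-random sequence, and the observation that $P$-randomness alone yields $R$-randomness --- and they are all sound. Your argument is closer to Vovk's original one, avoids any use of $\KK(P,Q)$ in the main estimate, and directly yields the summability $\sum(P(0\cnd x)-Q(0\cnd x))^2<\infty$ along $\omega$ (the content of Theorem~\ref{m-l_sum} for this pair of measures). What it does \emph{not} recover is the quantitative form of Lemma~\ref{lem1}: the bound implicit in your test argument is exponential in the randomness deficiency, whereas the paper's bound is linear in $\log C+\KK(P,Q)$, and that linearity is exactly what is later fed into Theorems~\ref{m-l_sum} and~\ref{sum_solomonoff}. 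So your proof fully establishes Theorem~\ref{vovk}, but could not be substituted for the paper's lemma elsewhere without weakening the final bounds.
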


We will prove this theorem (and even more exact statement) in the next section.


\begin{proof}[Proof of Theorem~\ref{m-l}]

Now we have a sequence $\omega$ that is Martin-L\"of random with respect to some computable measure $P$, so $D=d(\omega\cnd P)$ is finite.  For each prefix $x$ of $\omega$ we take the best explanation $Q$ that makes the expression
\[
3\KK(Q)-\log Q(x)
\]
minimal. Note that $P$ is among the candidates for $Q$, so this expression should not exceed
\[
3\KK(P)-\log P(x).
\]
Since $\omega$ is random with respect to $P$ and $x$ is a prefix of $\omega$, Schnorr--Levin theorem guarantees that the latter expression 
\[
3\KK(P)-\log P(x)=\KA(x)+O(1)
\]
where constant in $O$ depends on $P$ and $\omega$ but not on~$x$. On the other hand, the inequality $\KA(x)\le \KK(Q)-\log Q(x)+O(1)$ implies that
\begin{equation}
\label{first}
3\KK(Q)-\log Q(x) = 2\KK(Q) + \KK(Q) - \log Q(x) \ge 2\KK(Q)+\KA(x) - O(1).     
\end{equation}
 
So measures $Q$ with large $\KK(Q)$ cannot compete with $P$, and there is only a finite list of candidate measures for the best explanation $Q$. For some of these $Q$ the sequence $\omega$ is $Q$-random with respect to $Q$, so one can use Vovk's theorem to get the convergence of predicted probabilities when these measures are used.

Still we may have some ``bad'' $Q$ in the list of candidates for which $\omega$ is not $Q$-random. However, the Schnorr--Levin theorem guarantees that for a bad $Q$ we have
\[
-\log Q(x) - \KA(x) \to\infty
\]
if $x$ is a prefix of $\omega$ of increasing length. So the difference between two sides of~\eqref{first} goes to infinity as the length of $x$ increases, so $Q$ loses to $P$ for large enough $x$ (is worse as an explanation of $x$). Therefore, only good $Q$ will be used for prediction after sufficiently long prefixes, and this finishes the proof of Theorem~\ref{m-l}.
\end{proof}


\section{On the expectation of squares of errors} 

\label{sec:solomonoff}
In this section we prove Theorem~\ref{sum_solomonoff}.
First we will prove some strengthening of Theorem~\ref{vovk}
\begin{lemma}
\label{lem1}
Let $P$ and $Q$ be computable distributions. and let $\m$ be a universal semi-measure. 
Assume that for string $ x=x_1  \ldots x_n$ and $C > 0$ it holds that $P(x),Q(x) \ge  \m(x)/C$. Then:
$$\sum_{i=1}^{n-1} (P(x_i | x_1 \ldots x_{i-1}) - Q(x_i | x_1 \ldots x_{i-1}))^2 = O(\log C + \KK(P,Q)).$$
\end{lemma}

\begin{proof}[Proof of Theorem~\ref{vovk} from Lemma~\ref{lem1}]
According to one of definitions of Martin-L{\"o}f randomness the values $\m(x) / P(x) $ and $\m(x) / Q(x)$ are bounded by a constant. It reminds to use Lemma~\ref{lem1}.
\end{proof}

\begin{proof}[Proof of Lemma~\ref{lem1}]
Denote
\[
p_i = P(x_i \cnd x_1\ldots x_{i-1}), \quad 
q_i = Q(x_i \cnd x_1\ldots x_{i-1}).
\] Note that
\[
P(x_1\ldots x_n)= p_1 p_2\ldots p_n, \quad 
Q(x_1\ldots x_n)= q_1 q_2\ldots q_n.
\]
Now consider the ``intermediate'' measure $R$ for which the probability of $0$ (or $1$) after some $x$ is the average of the same conditional probabilities for $P$ and $Q$:
\[
R(0\cnd x_1\ldots x_{i-1}) =\frac{P(0\cnd x_1\ldots x_{i-1})+Q(0\cnd x_1\ldots x_{i-1})}{2}. 
\]
The corresponding $r_i=R(x_i\cnd x_1\ldots x_{i-1})$ are equal to $(p_i+q_i)/2$. 

Probability distribution $R$ is computable and $\KK(R) \le \KK(P,Q) + O(1)$. Hence,  it holds that $R(x) \le 2^{\KK(P,Q)} \m(x) \le 2^{\KK(P,Q)} \cdot C\cdot P(x)$. The similar inequality holds for distribution $Q$. Therefore:
$$r_1 \cdots r_n \le C  \cdot 2^{\KK(P,Q)} \cdot p_1 \cdots p_n $$
and
$$r_1 \cdots r_n \le C  \cdot 2^{\KK(P,Q)} \cdot q_1 \cdots q_n .$$

Multiplying we obtain:
\begin{equation}
\label{rev}
(\frac{p_1 + q_1}{2}\cdots \frac{p_n + q_n}{2})^2  \le C^2 \cdot 2^{2\KK(P,Q)} \cdot p_1 \cdots p_n \cdot q_1 \cdots q_n .
\end{equation}

These two inequalities show that the product of arithmetical means of $p_i$ and $q_i$ is not much bigger than the product of their geometrical means, and this is only possible if $p_i$ is close to $q_i$ (logarithm is a strictly convex function).

To make the argument precise, recall the bound for the logarithm function:
\begin{lemma}\label{lem:convex}
For $p,q\in (0,1]$ we have
\[
\log \frac{p+q}{2}- \frac{\log p + \log q}{2} \ge \frac{1}{8\ln 2}(p-q)^2
\]
\end{lemma}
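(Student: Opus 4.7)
The plan is to reduce to the natural logarithm and then do a second-order Taylor expansion of $\ln$ around the midpoint $m=(p+q)/2$. Writing $f(t)=\ln t$, Taylor's formula with Lagrange remainder gives
\[
f(p)=f(m)+f'(m)(p-m)+\tfrac{1}{2}f''(\xi_1)(p-m)^2,
\]
\[
f(q)=f(m)+f'(m)(q-m)+\tfrac{1}{2}f''(\xi_2)(q-m)^2,
\]
for some $\xi_1$ between $m$ and $p$ and $\xi_2$ between $m$ and $q$. The whole point of expanding around the midpoint is that the linear terms cancel when I add the two equalities, since $(p-m)+(q-m)=0$. Using $(p-m)^2=(q-m)^2=(p-q)^2/4$, I obtain the identity
\[
\tfrac{1}{2}(\ln p+\ln q)-\ln m=\tfrac{(p-q)^2}{16}\bigl[f''(\xi_1)+f''(\xi_2)\bigr].
\]

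Next I plug in $f''(t)=-1/t^2$. Since $p,q\in(0,1]$, the midpoint and both $\xi_i$ lie in $(0,1]$, so $-1/\xi_i^2\le -1$. Therefore
\[
\ln m-\tfrac{1}{2}(\ln p+\ln q)\ge \tfrac{(p-q)^2}{16}\cdot 2=\tfrac{(p-q)^2}{8}.
\]
Finally, converting natural logarithms to base-2 logarithms multiplies the whole inequality by $1/\ln 2$, producing exactly the claimed constant $1/(8\ln 2)$.

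I do not anticipate any serious obstacle; the only things to be careful about are (i) that the Taylor expansion is centered at $m=(p+q)/2$ so the first-order terms vanish, and (ii) that the hypothesis $p,q\le 1$ is used precisely to guarantee $\xi_1,\xi_2\le 1$ and hence $|f''(\xi_i)|\ge 1$. If one preferred an inequality-free derivation, the same bound can be obtained by integrating $f''(t)=-1/t^2$ twice, but the Taylor remainder form above seems the cleanest route.
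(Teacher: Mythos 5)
Your proof is correct. It takes a genuinely different route from the paper's: you expand $\ln$ to second order around the midpoint $m=(p+q)/2$ with Lagrange remainder, let the linear terms cancel by symmetry, and then bound $|f''(\xi_i)|=1/\xi_i^2\ge 1$ using the hypothesis $p,q\le 1$ (which forces $\xi_1,\xi_2\le 1$). The paper instead observes that the left-hand side is invariant under rescaling $(p,q)\mapsto(cp,cq)$ while the right-hand side only grows for $c\ge 1$, normalizes to $p=1-h$, $q=1+h$, and finishes with the elementary inequality $\ln(1-h^2)\le -h^2$. Both arguments are short and both land on the same constant $\tfrac{1}{8\ln 2}$. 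Your Taylor version makes transparent where the constant comes from (it is governed by $\inf|f''|$ on the interval containing $p,q,m$, which is why $p,q\le 1$ is needed) and would generalize to other concave functions; the paper's version avoids the mean-value form of the remainder entirely and reduces everything to one standard logarithm bound. Either proof is acceptable here.
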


\begin{proof}
Let us replace the binary logarithms by the natural ones; then the factor $\ln 2$ disappears. Note that the left hand side remains the same if $p$ and $q$ are multiplied by some factor $c\ge 1$ while the right side can only increase. So it is enough to prove this for $p=1-h$ and $q=1+h$ for some $h\in(0,1)$, and this gives
\[
-\frac{\ln(1-h) + \ln(1+h)}{2}\ge \frac{1}{2}h^2;
\]
and this happens because $\ln(1-h)+\ln(1+h)=\ln(1-h^2)\le -h^2$.
\end{proof}
For the product of $n$ terms we get the following bound:
\begin{lemma}\label{lem:convexn}
If for $p_1,\ldots,p_n,q_1,\ldots,q_n\in (0,1]$ we have
\[
\left(\frac{p_1+q_1}{2}\cdot\ldots\cdot\frac{p_n+q_n}{2}\right)^2 \le c p_1\ldots p_nq_1\ldots q_n,
\]
then $\sum_i (p_i-q_i)^2 \le O(\log c)$, with some absolute constant hidden in $O(\cdot)$-notation.
\end{lemma}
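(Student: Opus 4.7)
The plan is to take logarithms of the assumed multiplicative inequality and reduce the claim to a term-by-term application of the single-variable bound already proved in Lemma~\ref{lem:convex}. Writing the hypothesis additively, after taking $\log_2$ and dividing by $2$, it becomes
\[
\sum_{i=1}^n \left[\log\frac{p_i+q_i}{2} - \frac{\log p_i + \log q_i}{2}\right] \le \frac{\log c}{2}.
\]
Every summand on the left is non-negative by concavity of $\log$, so there is no cancellation to worry about; this is the whole point of squaring the assumption first, which makes the ``defect from the AM--GM inequality'' additive over $i$.

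Next I would invoke Lemma~\ref{lem:convex} for each index $i$, bounding the $i$-th summand from below by $\frac{1}{8\ln 2}(p_i-q_i)^2$. Summing these lower bounds and combining with the displayed inequality yields
\[
\frac{1}{8\ln 2}\sum_{i=1}^n (p_i-q_i)^2 \le \frac{\log c}{2},
\]
so $\sum_i (p_i-q_i)^2 \le 4\ln 2 \cdot \log c$, which is the desired $O(\log c)$ bound with an absolute implicit constant.

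There is essentially no obstacle: the squaring in the hypothesis has been arranged precisely so that when the logarithm is taken, each factor contributes its own ``Pinsker-like'' quadratic defect, and Lemma~\ref{lem:convex} captures exactly that defect term by term. The only thing to double-check is that the bound from Lemma~\ref{lem:convex} is applicable for all $p_i,q_i\in(0,1]$ (it is, by hypothesis), and that the positive-term inequality is preserved under summation — both trivial. Thus the proof is a two-line deduction from the previous lemma.
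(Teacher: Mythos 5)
Your proof is correct and follows exactly the paper's own argument: take logarithms, rewrite the hypothesis as a bound on the sum of the AM--GM defects, and apply Lemma~\ref{lem:convex} term by term. The explicit constant $4\ln 2$ you extract is also right.
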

\begin{proof}
Taking logarithms, we get 
\[
2\sum_i \log\frac{p_i+q_i}{2} \le \log c + \sum_i \log p_i + \sum_i \log q_i,
\]
and therefore
\[
\sum_i \left(\log\frac{p_i+q_i}{2}-\frac{\log p_i + \log q_i}{2}\right) \le \frac{1}{2}\log c.
\]
It remains to use Lemma~\ref{lem:convex} to get the desired inequality.
\end{proof}
To complete the proof of Lemma~\ref{lem1} it remains to use in inequality \eqref{rev} Lemma~\ref{lem:convexn} for   $c:=C^2 \cdot 2^{2\KK(P,Q)}$.
\end{proof}

Now we  prove a strengthening of Theorem~\ref{m-l}. 

\begin{theorem}
\label{m-l_sum}
Let   $P$ be a computable measure, let  $\omega$ be a Martin-L{\"o}f random sequence with respect to  $P$ such that $d(\omega|P)= D$.

Then $$\sum_{\textup{$x$ is a prefix of $\omega$}}  (H(0| x) - P(0| x))^2 = O((\KK(P) + D)\cdot 2^{\frac{3 \KK(P) + D + O(1)}{2}}).$$ 
\end{theorem}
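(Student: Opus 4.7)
The plan is to refine the argument of Theorem~\ref{m-l}: decompose the sum according to which measure $Q_x$ is chosen as the best explanation, bound each piece via Lemma~\ref{lem1}, and multiply by the (explicitly small) number of candidate measures. For each prefix $x$ of $\omega$ we have $3\KK(Q_x) - \log Q_x(x) \le 3\KK(P) - \log P(x)$, because $P$ is itself a candidate. Combining $-\log P(x) \le \KA(x) + D + O(1)$ (from $d(\omega\cnd P) = D$) with the universal inequality $\KA(x) \le \KK(Q_x) - \log Q_x(x) + O(1)$ yields
\[
\KK(Q_x) \le \frac{3\KK(P) + D}{2} + O(1), \qquad -\log Q_x(x) \le \KA(x) + 3\KK(P) + D + O(1).
\]
Hence the distinct measures arising as some $Q_x$ along $\omega$ number at most $2^{(3\KK(P) + D)/2 + O(1)}$, and each of them satisfies $Q_x(x) \ge \m(x)/C$ at the prefixes where it is chosen, for $C := 2^{3\KK(P) + D + O(1)}$. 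The measure $P$ also satisfies $P(x) \ge \m(x)/C$ at every prefix of~$\omega$, since $d(\omega\cnd P) = D \le \log C$.

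Now fix one such candidate $Q$ and let $M_Q \in \mathbb{N} \cup \{\infty\}$ be the supremum of the lengths of prefixes $y$ of $\omega$ with $Q_y = Q$. Choose a prefix $x$ of $\omega$ with $Q_x = Q$ whose length is as close to $M_Q$ as needed (arbitrarily long when $M_Q$ is infinite). The hypotheses of Lemma~\ref{lem1} hold at $x$ with constant $C$, so
\[
\sum_{i=1}^{|x|-1} \bigl(P(x_i \cnd x_1 \ldots x_{i-1}) - Q(x_i \cnd x_1 \ldots x_{i-1})\bigr)^2 = O(\log C + \KK(P, Q)) = O(\KK(P) + D),
\]
using $\KK(Q) = O(\KK(P) + D)$ from the previous paragraph. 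Since $P$ and $Q$ are measures, the $i$-th summand equals $(P(0 \cnd x_1 \ldots x_{i-1}) - Q(0 \cnd x_1 \ldots x_{i-1}))^2$. Letting $|x|$ approach $M_Q$ and absorbing at most one uncovered endpoint term of size $\le 1$ into $O(\cdot)$, we obtain
\[
\sum_{\substack{y \text{ prefix of } \omega \\ Q_y = Q}} \bigl(Q(0 \cnd y) - P(0 \cnd y)\bigr)^2 = O(\KK(P) + D).
\]

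Summing over the at-most-$2^{(3\KK(P) + D)/2 + O(1)}$ candidates $Q$ gives
\[
\sum_{x \text{ prefix of } \omega} (H(0\cnd x) - P(0\cnd x))^2 = \sum_Q \sum_{y:\, Q_y = Q} (Q(0\cnd y) - P(0\cnd y))^2 \le O\!\left((\KK(P) + D)\cdot 2^{(3\KK(P) + D + O(1))/2}\right),
\]
which is exactly the asserted bound. The main technical nuisance I expect is the bookkeeping in the middle step: Lemma~\ref{lem1} requires $Q(x) \ge \m(x)/C$ precisely at the upper endpoint of the summation, whereas a priori we only have that bound at the prefixes where $Q$ is actually chosen as the best explanation. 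The fix is to apply the lemma at (or arbitrarily near) the longest such prefix and to observe that at most one term of the sum can be left uncovered, contributing only $O(1)$.
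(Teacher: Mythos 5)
Your proposal is correct and follows essentially the same route as the paper: the same use of the optimality inequality $3\KK(Q)-\log Q(x)\le 3\KK(P)-\log P(x)$ combined with the deficiency bound to control both $\KK(Q)$ and $Q(x)/\m(x)$, the same application of Lemma~\ref{lem1} with $C=2^{3\KK(P)+D+O(1)}$, the same counting of at most $2^{(3\KK(P)+D+O(1))/2}$ candidate measures, and the same handling of the uncovered endpoint term. Your explicit treatment of the case where a candidate $Q$ is chosen infinitely often (taking arbitrarily long prefixes and passing to the limit) is a point the paper leaves implicit, but it is the same argument.
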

\begin{proof}

Assume that distribution $Q$ is the best for some $x =x_1\ldots x_n$. 
Then 
\begin{equation}
\label{eq1}
  3 \KK(Q) - \log Q(x) \le 3 \KK(P) - \log P(x).  
\end{equation}

Since $d(\omega |P)=D$ we obtain that 
\begin{equation}
\label{eq2}
 -\log P(x) \le \KA(x) +D.   
\end{equation}
Therefore,
$$- \log Q(x) \le 3 \KK(P) - \log P(x) \le 3 \KK(P) + \KA(x) +D \text{ , so}$$
$$ Q(x) \ge M(x) \cdot 2^{-3\KK(P)- D} \text{ and}$$
$$ P(x) \ge M(x) \cdot 2^{- D}.$$
We want to estimate $\sum_{i=1}^n (Q(0| x_1 \ldots x_i)- P(0| x_1 \ldots x_i))^2$   by  Lemma~\ref{lem1}.
We can use this lemma for $C=2^{3\KK(P)+ D}$.

From \eqref{eq1}, \eqref{eq2} and  $\KA(x)\le -\log Q(x)+\KK(Q)$ it follows that 
\begin{equation}
\label{eq3}
\KK(Q) \le \frac{3 \KK(P) + D + O(1)}{2}.
\end{equation}
Therefore by Lemma~\ref{lem1} we obtain
$$\sum_{i=1}^{n-1} (Q(0| x_1 \ldots x_i) - P(0| x_1 \ldots x_i))^2 = O(\KK(P) + D).$$
In fact, we can not use this lemma for the last term $(Q(0| x) - P(0| x))^2$. This term we just bound by  $1$. 

So, every probability distribution that is the best for some $x$ ``contributes'' $O(\KK(P) + D)$ in the sum $\sum_{x \text{ is a prefix of }\omega}
(H(0\cnd x) - P(0 \cnd x))^2$.

There are at most $2^{\frac{3 \KK(P) + D + O(1)}{2}}$ such distribution by \eqref{eq3}, so we obtain the required estimation. 
\end{proof}

Recall the following well-known statement
\begin{proposition}
\label{measure}
Let $P$ be a computable distribution. Then the $P$-measure of all sequences $\omega$ such that $d( \omega \cnd P) \ge D$ is not greater than $2^{-D}$.   
\end{proposition}
\begin{proof}[Proof of Theorem~\ref{sum_solomonoff}]
Denote by $\Omega$  the set of all infinite sequences with zeros and ones. Note that 

$$
 \sum_x P(x) (P(0 \cnd x) - H(0 \cnd x))^2=  \int_{(\Omega, P)} \sum_{\textup{$x$ is a prefix of $\omega$}} (H(0| x) - P(0| x))^2.   
$$
By Theorem~\ref{m-l_sum} we can estimate  the sum in the integral for  sequence $\omega$ with $d(x|\omega)=D$ as  $O((\KK(P) + D)\cdot 2^{\frac{3 \KK(P) + D + O(1)}{2}})$. By Proposition~\ref{measure} the measure of sequences with such randomness deficiency is at most $2^{-D}$. So we can estimate the integral as $$\sum_{D=0}^{\infty}  O((\KK(P) + D)\cdot 2^{\frac{3 \KK(P) + D + O(1)}{2}}) 2^{-D} = O(\KK(P) 2^{\frac{3 \KK(P)}{2}} ).$$  
(Recall that  the $P$-measure of sequences that are not Martin-\L{\"o}f random with respect to $P$ is equal to $0$, so they do not affect to the integral.)
\end{proof}

\section{Prediction for enumerable classes of hypotheses}
\label{enum}
Assume that we have some information about distribution $P$. We know that $P$ belongs to some enumerable set $\mathcal{A}$ of computable distributions, (i.e. there is an algorithm that enumerate programs that generate distributions from $\mathcal{A}$). For this case it is natural to consider the following measure of complexity  in $\mathcal{A}$: 
$$\KK_{\mathcal{A}}(P):= \KK (i_P), \text{ 
 where } i_p  \text{ is the number of }  P  \text{ in a computable enumeration of }  \mathcal{A}. $$  

 If $P$ appears multiple times in an enumeration, we select $i_P$ with the lowest complexity.
(This definition depends on the choice of a computable enumeration, but the influence of this dependence is constrained by an additive constant.)
 Clearly, $ \KK_\mathcal{A}(P) \ge \KK(P) + O(1)$.

 We can now extend our prediction method. To predict the next bit of $x$, we choose $Q \in \mathcal{A}$ with the lowest value of $3 \KK_\mathcal{A}(Q) - \log Q(x)$ and base our next-bit prediction on $Q$
 : $$H_{\mathcal{A}}(x):= \frac{Q(xb)}{Q(x)}.$$
 
In this section, we demonstrate that if set $\mathcal{A}$ possesses certain favorable properties, analogous results to previous theorems emerge. Furthermore, we can achieve an improved error estimation.
 
 We assume that enumerable set $\mathcal{A}$ has the following property:
 if $P_1, \ldots, P_k \in \mathcal{A}$ then their mixture $\frac{P_1 + \ldots P_k}{k}$ belongs to $\mathcal{A}$.    Moreover there exists an algorithm that for given numbers of $P_1, \ldots, P_k$ (in some enumeration of $\mathcal{A}$) outputs the number of their mixture.
 
 Further everywhere $\mathcal{A}$  is an enumerable set of computable distributions with this property.
 \begin{remark}
 Consider the following example of set $\mathcal{A}$: the set of all \emph{provable} (in some proof system) computable distributions on the binary tree. For every program $p \in \mathcal{A}$, there exists a proof that $p(x)$ halts for every $x$, $p(x) = p(x0) + p(x1)$, and $p(\text{empty word}) = 1$.
 We guess that all using in practice computable distributions are provable computable, so, in some sense we  get better error estimation ``almost  free''. Our discussion about practice might look unsuitable because our prediction method is not computable. However,  it can be considered as the ideal version of the MDL principle prediction which is approximated in practice.      
 \end{remark}
 
\begin{theorem}
\label{enum_m_l}
Let   $P \in \mathcal{A}$ be a computable measure, let  $\omega$ be a Martin-L{\"o}f random sequence with respect to  $P$ such that $d(\omega|P)= D$.

Then $$\sum_{\textup{$x$ is a prefix of $\omega$}}  (H_{\mathcal{A}}(0| x) - P(0| x))^2 = O((\KK_{\mathcal{A}}(P) + D)\cdot \pol(\KK_{\mathcal{A}}(P) + D).$$ 
\end{theorem}

\begin{theorem}
\label{sum_solomonoff_enum}
For every computable distribution $P \in \mathcal{A}$ the following sum over all binary strings $x$ is finite:
$$con
 \sum_x P(x) (P(0 \cnd x) - H(0 \cnd x))^2 < \pol(\KK_{\mathcal{A}}(P)).   
$$
\end{theorem}
The proofs of these theorems are in general the same as the proofs of theorems~\ref{m-l_sum} and~\ref{sum_solomonoff}, however some new tools are added. The difference is that we can get better estimation on the number of possible best explanations for prefixes of some sequence.   
\begin{lemma}
\label{enum_lemma}
Let $x$ be a finite string. Assume that there are $2^k$ probability distributions  

$Q_1, \ldots Q_{2^k} \in \mathcal{A}$ such that for every $i$ it holds $\KK_\mathcal{A}(Q_i) \le a$ and $Q_i(x) \ge 2^{-b}$. Then there is probability distribution $Q \in \mathcal{A}$ such that 

$\KK_\mathcal{A}(Q) \le a - k + O(\log (a + k))$ and $Q(x) \ge 2^{-b-k}$. 
\end{lemma}
Note that  $3 \KK_\mathcal{A}(Q) - \log Q(x) \le 3 \cdot (a - k) + O(\log (a + k)) + b + k \le 3 \cdot a - b$ for big enough $k$. This means that string $x$  can not has many ``best'' explanations.  
\begin{proof}[Proof of Lemma~\ref{enum_lemma}]
Let enumerate all distributions of $\mathcal{A}$ with complexity at most $a$ by groups of size $2^{k-1}$ (the last group can be incomplete). The number of such groups is $O(2^{a-k})$. The complexity of every group is at most $a - k + O(\log (a + k))$. Indeed, to describe a group we need its ordinal number in an enumeration and describe this enumeration (we need to know $k$, $a$ and some enumeration of $\mathcal{A}$).

One of these complete group contains some $Q_i$.
Define $Q$ as the mixture of the distributions in this group. 
Since the group has complexity at most $a - k + O(\log a + k)$ the same estimation holds for the complexity of $Q$.  Since some $Q_i$ belongs to the mixture it holds that $Q(x) \ge 2^{-b-k+1}$. 
Recall that $Q$ belongs to $\mathcal{A}$ because every mixture of distributions from $\mathcal{A}$ belongs to $\mathcal{A}$.
\end{proof}
Also we need the following lemma.
\begin{lemma}
\label{prefix}
Let string $s$ be a prefix of string $h$ and let $P$ be a computable distribution such that $d(s \cnd P) =D$. Then $d(h \cnd P) \ge  D  - 2 \log D + O(1)$.
\end{lemma}
(So,  a prefix oconf a string that has small deficiency, has
(almost as) small deficiency).

In fact the proof of this lemma is the same as the proof of Theorem 124 in \cite{suv}.  
\begin{proof}[Proof of Lemma~\ref{prefix}]
For each $k$ consider the enumerable set of all finite sequences that have
deficiency greater than $k$. All the infinite continuations of these sequences form an
open set $S_k$, and $P$-measure of this set does not exceed $2^{-k}$.
 Now consider the
measure $P_k$ on $\Omega$ that is zero outside $S_k$ and is equal to $2^kP$ inside $S_k$. That means
that for every set $U$ the value $P_k(U)$ is defined as $2^k P(U \cap P_k)$. Actually, $P_k$ is
not a probability distribution according to our definition, since $P_k(\Omega)$ is not equal to $1$. However,
$P_k$ can be considered as a lower semi-computable semi-measure, if we change it a bit
and let $P_k(\Omega)=1$ (this means that the difference between $1$ and the former value
of $P_k(\Omega)$ is assigned to the empty string).

It is clear that $P_k$ is lower semi-computable since $P$ is computable and $S_k$ is enumerable.

Now consider the sum

$$ S =\sum_k \frac{1}{k (k-1)}P_k $$
It is a lower semi-computable semi-measure  again
 Then we have
 $$-\log S(x) \le -\log P(x) - k +2\log k + O(1) $$
for every string $x$ that has a prefix with deficiency greater than $k$. Since $S$ does not
exceed  a priori probability (up to $O(1)$-factor), we get the desired
inequality.
\end{proof}

\begin{proof}[Proof of Theorem~\ref{enum_m_l}]

{\bf Part $1$}.
We claim that there are only $\poly(D + \KK_{\mathcal{A}}(P))$ different distributions that are the best for some prefix of $\omega$. 

Let $x$ be a prefix of $\omega$ and $Q$ is the best distribution for $x$. As in the proof of Theorem~\ref{m-l_sum} (see \eqref{eq3})  we obtain 
\begin{equation}
\label{en1}
 \KK_{\mathcal{A}}(Q) \le \frac{3 \KK_{\mathcal{A}}(P) + D + O(1)}{2},   
\end{equation}
$$Q(x) \ge M(x)\cdot 2^{-3\KK_{\mathcal{A}}(P)-D}$$
and hence
\begin{equation}
\label{en2}
d(x \cnd Q)\le 3 \KK_{\mathcal{A}}(P) +D.    
\end{equation}

Let $Q_1, \ldots, Q_m$ be different and the best distribution for prefixes $x_1, \ldots, x_m$ of $\omega$. 

We need to prove that $m=\poly(D + \KK_{\mathcal{A}}(P))$. 

Fix some natural $a$ and $b$. Consider all $Q_i$ such that
 $\KK(Q_i) = a$ and $$b \le d(x_i \cnd Q_i) < 2b.$$ 

It is enough to  we prove that there are only $\poly(D + \KK(P))$ best distributions with the such complexity and randomness deficiency.  Indeed, the honest estimation of $m$ will be multiplied by $\poly(D + \KK(P))$ because these parameters are bounded by polynomials by \eqref{en1} and \eqref{en2}. So, further we consider only $Q_i$ with such parameters.

Let $x_i$ be the shortest prefix among $x_1, \ldots x_m$.

By Lemma~\ref{prefix}  every $Q_j$ is ``rather good'' distribution for $x_i$: $d (x_i \cnd Q_j) \le 2b + O(\log b)$ and hence 
$Q_j(x_i) \ge Q_i(x) \cdot 2^{- O(\log b)}$.
By Lemma~\ref{enum_lemma} there exists a distribution from $R \in \mathcal{A}$ such that $$\KK_{\mathcal{A}}(R) \le a - \log m + O(\log a + \log m) \text{ and}$$  $$R(x) \ge Q_i(x) \cdot 2^{- \log m - O(\log b)}.$$
Since $Q_j$ is not worse distribution then $R$ for $x$ we have:
$$3\cdot \KK_{\mathcal{A}}(Q_i) - \log Q_i(x) \le  3\cdot \KK_{\mathcal{A}}(R) - \log R(x).$$
Therefore:
$$ 3a \le 3a - 2\log m  + O(\log b) \text{ and hence}$$
$$\log m \le O(\log b) = O(\log (\KK_{\mathcal{A}}(P) + D).$$
That is proved our claim.

{\bf Part $2$}.
To complete the proof we do the same things as in the proof of Theorem~\ref{m-l_sum}. 

If $x=x_1\ldots x_n$ is a prefix of $\omega$ and $Q$ is the best distribution for $x$ then by Lemma~\ref{lem1} 
$$\sum_{i=1}^{n-1} (Q(0| x_1 \ldots x_i) - P(0| x_1 \ldots x_i))^2 = O(\KK_{\mathcal{A}}(P) + D + \KK(P,Q) =O(\KK_{\mathcal{A}}(P) +D ).$$
(In the last equation we use $\KK(P,Q) = O(\KK(P)+\KK(Q)) = O(\KK_{\mathcal{A}}(P)+\KK_{\mathcal{A}}(Q))$.)
So, every probability distribution that is the best for some $x$ ``contributes'' $O(\KK_{\mathcal{A}}(P) + D)$ in the sum $\sum_{\textup{$x$ is a prefix of $\omega$}}  (H_{\mathcal{A}}(0| x) - P(0| x))^2$. There are $\poly(D + \KK_{\mathcal{A}}(P))$ such distributions, so we obtain the required estimation. 

\end{proof}
\begin{proof}[Proof of Theorem~\ref{sum_solomonoff_enum}]
The proof is the same as the proof of Theorem~\ref{sum_solomonoff} but with using Theorem~\ref{enum_m_l} instead of Theorem~\ref{m-l_sum}.
\end{proof}

\section{Open questions}
A natural question arises: can we get a better estimation in the last theorem than  $O(\KK(P) 2^{\frac{3 \KK(P)}{2}} )$? The exponential (in $\KK(P)$) estimation arises from our attempt to estimate the number of distributions that are optimal for some $x$. However, the author is not aware of  an example of $P$-random sequence $\omega$ such that there are exponentially many (in terms of $\KK(P)$ and $d(\omega|P)$) different best distributions for prefixes of $\omega$. 

Algorithmic statistics \cite{gtv,vs,suv} studies good distributions for strings among distributions on finite sets.  There exists a family of ``standard statistics'' that cover all the best distributions for finite strings. It is interesting: are these the same for distributions on the binary tree?
\section*{Acknowledgements}
I would like to thank Alexander Shen and Nikolay Vereshchagin  for their useful
discussions, advice and remarks.

Funded by the European Union (ERC, HOFGA, 101041696). Views and opinions expressed are however those of the author(s) only and do not necessarily reflect those of the European Union or the European Research Council. Neither the European Union nor the granting authority can be held responsible for them.
Also supported by FCT through the LASIGE Research Unit, ref.\ UIDB/00408/2020 and ref.\ UIDP/00408/2020.%
%
%
%
%
%
%

\end{document}